\newtheorem{theorem}{Theorem}[section]
\newtheorem{lemma}[theorem]{Lemma}
\begin{document}

\title{The price of fairness for a small number of indivisible items}
\author{Sascha Kurz}
\address{Business Mathematics, University of Bayreuth, 95440 Bayreuth, sascha.kurz@uni-bayreuth.de}

\maketitle

\begin{abstract}
  Incorporating fairness criteria in optimization problems comes at a certain cost, which is measured by the so-called price of fairness. Here 
we consider the allocation of indivisible goods. For envy-freeness as fairness criterion it is known from literature that the price of fairness can
increase linearly in terms of the number of agents. For the constructive lower bound a quadratic number of items was used. In practice this might be 
inadequately large. So we introduce the price of fairness in terms of both the number of agents and items, i.e., key parameters which generally may
be considered as common and available knowledge. It turned out that the price of fairness increases sublinear if the number of items is not too 
much larger than the number of agents. For the special case of coincide of both counts exact asymptotics could be determined. Additionally 
an efficient integer programming formulation is given.
\end{abstract}  

\section{Introduction}

Fair division, i.e., the the problem of dividing a set of goods between several agents, is studied since ancient times, see
e.g.\ \cite{brams1996fair}. As argued by Bertsimas et al. \cite{bertsimas2011price}, harming a certain fairness criterion 
can cause the situation that a globally optimal solution is not implementable by self-interested agents. So, several authors 
have studied the price of fairness as a measurement of the costs of ensuring a certain kind of fairness among the agents.

Here we consider the allocation of $m$ indivisible items among $n$ agents with respect to the fairness criterion of 
envy-freeness. Additionally we assume additive utility functions summing up to one for all agents. This setting and other 
fairness criteria and types of items have been studied, see e.g.\ \cite{caragiannis2012efficiency}.

While our theoretical setting is rather narrow, our contribution lies in highlighting that the number of items has 
a significant impact on the price of fairness. In our setting the price of fairness can be as large as $\Theta(n)$ if 
we allow a large number of items. If the number of items is restrict to a small number, compared to the number of agents, 
then it turns out that the price of fairness is $\Theta(\sqrt{n})$. Even the smallest possible case, 
admitting envy-free allocations, $m=n$ is far from being innocent. Nevertheless we determine its exact value for all $n$ up to a 
constant and give a fast-to-solve ILP formulation.

\section{Basic notation and definitions}
Let $\mathcal{J}=\{1,\dots,n\}$ be a set of agents and $\mathcal{I}=\{1,\dots,m\}$ be a set of indivisible items. Each agent $j\in\mathcal{J}$ 
has a non-negative and additive utility function $u_j$ over the subsets of $\mathcal{I}$ with $u_j(\emptyset)=0$ and $u_j(\mathcal{I})=1$. 
An allocation $\mathcal{A}=(A_1,\dots,A_n)$ is a partition of $\mathcal{I}$ meaning that the elements of $A_j$ are allocated to agent $j$. 
We call an allocation envy-free, if we have $u_{j}(A_j)\ge u_j(A_{j'})$ for all $j,j'\in\mathcal{J}$, i.e., no agent evaluates 
a share of one of the other agents higher than his own share. Depending on the utility functions there may be no envy-free allocation at all, 
consider e.g.\ $u_j(\{1\})=1$ and $u_j(\{i\})=0$ for all $i\neq 1$. As a global evaluation of an allocation we use
the sum of the agents utilities, i.e., $u(\mathcal{A})=\sum_{j=1}^n u_j(A_j)$. By $\mathcal{A}^\star$ we denote an allocation maximizing 
the global utility $u$ and similarly by $\mathcal{A}^\star_f$ we denote an envy-free allocation, if exists, maximizing $u$. With this the price 
of envy-freeness for $n$ agents $p_{envy}(n)$ is defined as the supremum of $u(\mathcal{A}^\star)/u(\mathcal{A}^\star_f)$. Obviously we have 
$p_{envy}(1)=1$. For $n>1$ the authors of \cite{caragiannis2012efficiency} have shown $\frac{3n+7}{9}\le p_{envy}(n)\le n-\frac{1}{2}$. 
Besides $p_{envy}(2)=\frac{3}{2}$ no exact value is known.

The construction for the lower bound of $p_{envy}(n)$ uses $\Omega(n^2)$ items so that one can ask if the price of fairness decreases if the number 
of items is restricted to a sub-quadratic number of items, which seems to be more reasonable in practice. So we define $p_{envy}(n,m)$ 
as the supremum of $u(\mathcal{A}^\star)/u(\mathcal{A}^\star_f)$, where to number of items equals $m$. In any envy-free allocation we have $u_j(A_j)\ge \frac{1}{n}$ since 
otherwise $u_j(\mathcal{I})<1$. Thus $\left|A_j\right|\ge 1$ so that we can assume $m\ge n$. The first case $m=n$ is studied in the next 
section. Obviously we have $p_{envy}(n,m)\le p_{envy}(n,m+1)\le p_{envy}(n)$ for all $m\ge n\ge 1$.

\section{The smallest case: One item per agent}
As an abbreviation we use $x_{ij}=u_j(\{i\})$ for all $1\le i,j\le n$. The maximum utility $u(\mathcal{A}^\star)$ can be easily 
determined as $\sum_{i=1}^n\max_{j} x_{ij}$ in linear time. As argued before in any envy-free allocation each agent is assigned exactly 
one item. W.l.o.g.\  we assume that item $j$ as assigned to agent $j$ for all $1\le i\le n$, i.e., we have $x_{jj}\ge x_{ij}$ for 
all $1\le i,j\le n$. Using a matching algorithm the existence of an envy-free allocation for $m=n$ can be checked in polynomial time. 
For this special case all envy-free allocations have the same utility.

The problem of determining worst case examples, i.e., $p_{envy}(n,n)$ can be formulated as an integer linear programming problem:
\begin{eqnarray}
  \max \sum_{i=1}^n\sum_{j=1}^n z_{ij}-\alpha\sum_{i=1}^n x_{ii}\label{tf}\\
  x_{ij}\in \mathbb{R}_{\ge 0}\quad \forall \, 1\le i,j\le n &\quad\quad&
  \sum_{i=1}^n x_{ij}=1 \quad \forall\, 1\le j\le n\label{ie_ut}\\
  x_{jj}\ge x_{ij} \quad \forall\, 1\le i,j\le n\label{ie_ef}\\
  y_{ij}\in \{0,1\}\quad\forall \, 1\le i,j\le n &\quad\quad&
  \sum_{j=1}^n y_{ij}=1\quad\forall\, 1\le i\le m\label{ie_oa}\\
  z_{ij}\in \mathbb{R}_{\ge 0}\quad\forall \, 1\le i,j\le n &\quad\quad&
  z_{ij}\le \min\!\left(y_{ij},x_{ij}\right)\quad\forall \, 1\le i,j\le n\label{ie_gw}
\end{eqnarray}
Here inequalities (\ref{ie_ut}) specify the non-negative utilities of agent $j$ for item $i$, which sum up to one.
The envy-freeness of the allocation given by $A_j=\{j\}$ is guaranteed by Inequality~(\ref{ie_ef}). In an optimal assignment
item $i$ is assigned to agent $j$ iff $y_{ij}=1$, see inequalities (\ref{ie_oa}). The auxiliary variables 
$z_{ij}$ measure the contribution to the global welfare, see inequalities (\ref{ie_gw}). If the target function 
(\ref{tf}) admits a non-negative value, then we have $p_{envy}(n,n)\ge \alpha$ and $p_{envy}(n,n)< \alpha$ otherwise. We can 
already conclude that the suppremum is attained in the definition for the price of fairness.

Using a bisection approach we were able to exactly determine $p_{envy}(n,n)$ for all $n\le 9$, i.e., $p_{envy}(n,n)=1,1,\frac{8}{7},\frac{4}{3}, 
\frac{60}{43},\frac{3}{2},\frac{63}{40},\frac{72}{43},\frac{9}{5}$. It turned out that the optimal solution 
for $n\ge 2$ have a rather special structure. The $x_{ij}$ all were either equal to zero or to $\frac{1}{k_j}$, where $2\le k_j\le n$ is an integer.
Even more, at most three different $k_j$-values are attained for a fixed number $n$, where one case is always $k_j=n$. In the next subsection we 
theoretically prove this empirical observation.

\subsection{Special structure of the optimal solutions for $\mathbf{m=n}$}
For the ease of notation we use $\tau:\{1,\dots,n\}\rightarrow\{1,\dots,n\}$, mapping an item $i$ to an agent $j$, representing an
optimal assignment, i.e., $y_{i\tau(i)}=1$ for all $1\le i\le n$. By $u^\star(x)=\sum_{i=1}^n x_{i\tau(i)}$ we denote 
the welfare of an optimal assignment and by $u^\star_f(x)=\sum_{i=1}^{n} x_{ii}$ the welfare of an optimal envy-free assignment. 
In the following we always assume that $x$ represents utilities from an example attaining $p_{envy}(n,n)$. We call 
an agent $j$ \textit{big} if $j\in \operatorname{im}(\tau)$ and \textit{small} otherwise.

\begin{lemma}
  If agent $j$ is small, then we have $x_{ij}=\frac{1}{n}$ for all $1\le i\le n$. 
\end{lemma}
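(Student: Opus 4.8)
The plan is a perturbation (exchange) argument exploiting the extremality of $x$. The crucial observation is that a small agent's utilities enter the two objective quantities very asymmetrically. Since $j\notin\operatorname{im}(\tau)$, we have $\tau(i)\neq j$ for every item $i$, so the column $(x_{ij})_{i=1}^n$ of agent $j$ never appears in $u^\star(x)=\sum_{i=1}^n x_{i\tau(i)}$. In the envy-free welfare $u^\star_f(x)=\sum_{k=1}^n x_{kk}$ this column contributes only through the single diagonal term $x_{jj}$. Intuitively, to push the ratio $u^\star/u^\star_f$ as high as possible one should make this denominator contribution $x_{jj}$ as small as the constraints permit. By (\ref{ie_ut}) and (\ref{ie_ef}) we have $\sum_{i=1}^n x_{ij}=1$ and $x_{jj}=\max_i x_{ij}$, whence $x_{jj}\ge\frac1n$ with equality precisely when $x_{ij}=\frac1n$ for all $i$. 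So the goal is to show that $x_{jj}=\frac1n$ is forced.

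First I would assume, for contradiction, that $x_{jj}>\frac1n$, and then define a new instance $x'$ by flattening agent $j$'s column, i.e.\ $x'_{ij}=\frac1n$ for all $i$, while leaving every other column unchanged. Feasibility of $x'$ is immediate: the new column is nonnegative and sums to one, and $x'_{jj}=\frac1n\ge\frac1n=x'_{ij}$ keeps the diagonal entry a column maximum, so (\ref{ie_ut})–(\ref{ie_ef}) still hold, while the untouched columns are unaffected. Because $\tau$ avoids agent $j$, the assignment $\tau$ realizes the same welfare in $x'$ as in $x$, so $u^\star(x')\ge\sum_{i=1}^n x'_{i\tau(i)}=\sum_{i=1}^n x_{i\tau(i)}=u^\star(x)$. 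On the other hand only the $j$-th diagonal entry changed, so $u^\star_f(x')=u^\star_f(x)-x_{jj}+\frac1n<u^\star_f(x)$, and both welfares are positive since every diagonal entry is at least $\frac1n$. Combining the two estimates yields
\[
\frac{u^\star(x')}{u^\star_f(x')}\ \ge\ \frac{u^\star(x)}{u^\star_f(x')}\ >\ \frac{u^\star(x)}{u^\star_f(x)}\ =\ p_{envy}(n,n),
\]
contradicting the extremality of $x$. Hence $x_{jj}=\frac1n$, which by the equality case above gives $x_{ij}=\frac1n$ for all $i$.

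The argument is short once the asymmetry is spotted, so the main point to handle carefully is the step $u^\star(x')\ge u^\star(x)$: one must be sure that flattening column $j$ cannot destroy the value already realized by $\tau$, which is exactly what $j\notin\operatorname{im}(\tau)$ guarantees (the optimal welfare is a maximum over assignments, so it can only increase). I would also remark that it is harmless if agent $j$ turns out to be big in $x'$, since we claim nothing about the structure of $x'$ beyond its feasibility and its strictly larger ratio. Finally, I would lean on the earlier observation that for $m=n$ all envy-free allocations share the same welfare, which legitimizes writing $u^\star_f$ as the diagonal sum throughout.
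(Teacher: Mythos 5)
Your proposal is correct and is essentially the paper's own argument: both flatten agent $j$'s column to the constant $\frac{1}{n}$, observe that $u^\star$ cannot decrease because $j\notin\operatorname{im}(\tau)$ means the column never contributes to the optimal assignment, and that $u^\star_f$ strictly decreases since only the diagonal entry $x_{jj}>\frac1n$ changes, contradicting extremality. You merely spell out the feasibility check and the equality case $x_{jj}=\frac1n\Leftrightarrow x_{ij}=\frac1n$ for all $i$, which the paper leaves implicit.
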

\begin{proof}
  If $x_{jj}\le \frac{1}{n}$ then we have $x_{ij}=\frac{1}{n}$ for all $1\le i\le n$ so that we assume $x_{jj}> \frac{1}{n}$.
  Consider $x'$ arising from $x$ by setting $x'_{ij}=\frac{1}{n}$ for all $1\le i\le n$. With this we have 
  $u^\star_f(x')<u^\star_f(x)$ and $u^\star(x')\ge u^\star(x)$.
\end{proof}

\begin{lemma}
  If agent $j$ is big, then we have $x_{ij}=x_{jj}$ for all $i$ with $\tau(i)=j$. 
\end{lemma}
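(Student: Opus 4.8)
The plan is to run a local improvement argument in the spirit of the previous lemma, exploiting that at an optimal example $x$ the fixed assignment $\tau$ already realizes $u^\star(x)$. First I would isolate the role of column $j$. Since $j$ is big, the set $\{i:\tau(i)=j\}$ is non-empty, and
\[
  u^\star(x)=\sum_{i:\tau(i)=j} x_{ij}+\sum_{i:\tau(i)\neq j} x_{i\tau(i)},\qquad u^\star_f(x)=x_{jj}+\sum_{k\neq j}x_{kk}.
\]
The second sum in $u^\star$ and the sum in $u^\star_f$ do not involve column $j$, so I treat them as constants $C$ and $E$. Hence the quotient to be maximized equals $(C+N_j)/(E+x_{jj})$ with $N_j:=\sum_{i:\tau(i)=j}x_{ij}$: only the entries indexed by items assigned to $j$ affect the numerator, and only $x_{jj}$ affects the denominator, while every entry $x_{ij}$ with $\tau(i)\neq j$ and $i\neq j$ is \emph{free}, contributing to neither.

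Assume for contradiction that some $i_0$ with $\tau(i_0)=j$ satisfies $x_{i_0j}<x_{jj}$; necessarily $i_0\neq j$. The main step is a single perturbation. If column $j$ carries any mass on a free item, i.e.\ some $x_{i_1j}>0$ with $\tau(i_1)\neq j$ and $i_1\neq j$, I shift a small amount $\varepsilon$ from $x_{i_1j}$ to $x_{i_0j}$. Envy-freeness is preserved because both entries stay in $[0,x_{jj}]$ and $x_{jj}$ is untouched, the denominator $E+x_{jj}$ is unchanged, and the value of the still feasible assignment $\tau$ grows by $\varepsilon$, so $u^\star$ grows by at least $\varepsilon$. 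The quotient strictly increases, contradicting optimality of $x$.

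It remains to dispose of the degenerate case where column $j$ has no free mass, so all of it sits on the items of $\{i:\tau(i)=j\}\cup\{j\}$. If a deficient entry $x_{i_0j}<x_{jj}$ persists, then each of these $\kappa:=\bigl|\{i:\tau(i)=j\}\cup\{j\}\bigr|$ entries is at most $x_{jj}$, at least one is strictly smaller, and they sum to $1$, forcing $x_{jj}>1/\kappa$. I then replace column $j$ \emph{wholesale} by the configuration that assigns the common value $1/\kappa$ to exactly those $\kappa$ items and $0$ elsewhere. This stays envy-free, still sums to one, pushes $N_j$ to the largest value compatible with the new diagonal entry, and strictly lowers $x_{jj}$ to $1/\kappa$; by monotonicity of $(C+N_j)/(E+x_{jj})$ the quotient again strictly increases, a contradiction. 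Thus $x_{ij}=x_{jj}$ for every $i$ with $\tau(i)=j$.

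I expect the obstacle to be precisely this degenerate case: once all mass is concentrated on $\{i:\tau(i)=j\}\cup\{j\}$ an infinitesimal shift is no longer available, and lowering $x_{jj}$ can collide with envy ties among entries already at $x_{jj}$. This is what forces the finite, whole-column replacement rather than a purely local move, and it also requires checking that the replacement yields a legitimate instance whose envy-free welfare is still the diagonal sum $E+x_{jj}$.
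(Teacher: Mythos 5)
Your proof is correct, and it takes a genuinely different route from the paper's, which is a single averaging move: with $w=\sum_{i:\tau(i)=j}x_{ij}$ and $k=|\{i:\tau(i)=j\}|$, the paper replaces every $x_{ij}$ with $\tau(i)=j$ by the mean $w/k$, arguing that this preserves $u^\star$ while strictly lowering $u^\star_f$ whenever $x_{jj}>w/k$. You instead fix the denominator and raise the numerator, by transporting an $\varepsilon$ of mass from an entry outside $\{i:\tau(i)=j\}\cup\{j\}$ onto a deficient assigned entry, and you only flatten the column (over $\{i:\tau(i)=j\}\cup\{j\}$, not over $\{i:\tau(i)=j\}$) in the degenerate case where no such mass remains. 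Your case split buys real robustness at exactly the points where the paper's one-liner is silent: if item $j$ is not among the items $\tau$ assigns to agent $j$, averaging over $\{i:\tau(i)=j\}$ leaves $x_{jj}$, and hence $u^\star_f$, unchanged, so the claimed strict decrease fails; and when $j$ is among them, dropping $x_{jj}$ to $w/k$ can push it below an untouched entry $x_{ij}$ with $\tau(i)\neq j$, breaking the normalization $x_{jj}\ge x_{ij}$ that makes the diagonal allocation envy-free. Your first move sidesteps both issues because $x_{jj}$ is untouched, and your second move only fires when all off-support entries vanish, so the uniform column $1/\kappa$ is automatically envy-free and the strict gain in the quotient follows from $N_j$ not decreasing while $x_{jj}$ strictly drops. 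The price you pay is a longer, two-case argument; what you gain is a proof that actually covers all configurations of $\tau$ relative to the diagonal.
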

\begin{proof}
  We set $w=\sum_{i:\tau(i)=j} x_{ij}$ and $k=\left|\left\{i\mid \tau(i)=j\right\}\right|$. W.l.o.g.\ assume 
  $x_{jj}>\frac{w}{k}$. Consider $x'$ arising from $x$ by setting $x'_{ij}=\frac{w}{k}$ for all $1\le i\le n$ with 
  $\tau(i)=j$. With this we have $u^\star_f(x')<u^\star_f(x)$ and $u^\star(x')\ge u^\star(x)$.
\end{proof}

\begin{lemma}
  If agent $j$ is big, then we can assume $x_{ij}=0$ or $x_{ij}=\frac{1}{n}$ for all $i$ with $\tau(i)\neq j$ w.l.o.g. 
\end{lemma}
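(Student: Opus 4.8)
The plan is to exploit that an off-entry $x_{ij}$ with $\tau(i)\neq j$ never enters the objective, so it is a free parameter that may be set to any convenient feasible value. Indeed, since $\tau(i)\neq j$ we have $x_{ij}\le x_{i\tau(i)}$, so $x_{ij}$ is not the row-maximum and does not contribute to $u^\star(x)=\sum_i x_{i\tau(i)}$; and for $i\neq j$ it lies off the diagonal, hence does not contribute to $u^\star_f(x)=\sum_i x_{ii}$ either. Thus, exactly as the averaging in the previous lemma preserved the sum of the on-entries and left both $u^\star$ and $u^\star_f$ untouched, any redistribution of the off-entries within a fixed column that preserves their total mass $M_j:=1-\sum_{i:\tau(i)=j}x_{ij}$ leaves the ratio $u^\star(x)/u^\star_f(x)$ unchanged. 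Since $x$ attains $p_{envy}(n,n)$, every such redistribution again yields an optimal instance, and this is what licenses the phrase ``w.l.o.g.''.

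First I would fix a big agent $j$ and carry out this redistribution to push each off-entry of column $j$ with $i\neq j$ to one of the two target values $0$ and $\tfrac1n$. The room for the value $\tfrac1n$ is available because each such entry is capped by $\min\!\left(x_{jj},x_{\tau(i)\tau(i)}\right)$ (using the previous lemma, which gives $x_{i\tau(i)}=x_{\tau(i)\tau(i)}$), and both $x_{jj}\ge\tfrac1n$ and $x_{\tau(i)\tau(i)}\ge\tfrac1n$ hold, every column-maximum being at least the column average $\tfrac1n$. Concretely, as long as two off-entries of the same column are not yet both in $\{0,\tfrac1n\}$, I move mass from the smaller toward the larger, or toward the common level $\tfrac1n$, until one of them hits a bound; this terminates with all of them in $\{0,\tfrac1n\}$ up to at most one exceptional entry per column.

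The hard part will be disposing of that single exceptional entry, namely the remainder $M_j-\tfrac1n\big\lfloor nM_j\big\rfloor\in[0,\tfrac1n)$: the pure within-column redistribution above reaches values in $\{0,\tfrac1n\}$ only when $M_j$ is an integer multiple of $\tfrac1n$, and the per-position caps $x_{\tau(i)\tau(i)}$ may additionally obstruct the spreading. To handle this I would couple the redistribution with a compensating change of the diagonal $x_{jj}$ (and, through the caps $x_{ij}\le x_{\tau(i)\tau(i)}$, of the target columns $\tau(i)$), shifting the fractional remainder onto the diagonal and invoking the optimality of $x$ to conclude that the induced change of $u^\star$ and $u^\star_f$ cannot strictly increase the ratio, so that it must be ratio-neutral and the new instance is again optimal. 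The remaining sub-case $\tau(j)\neq j$, in which the diagonal is itself an entry with $\tau(i)\neq j$, would be settled alongside the determination of $x_{jj}$. Verifying that these coupled moves stay feasible — respecting envy-freeness, the column sums, and the row-maxima that keep $\tau$ optimal — is where the actual effort of the proof is concentrated.
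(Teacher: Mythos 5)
Your opening observation is sound as far as it goes: an entry $x_{ij}$ with $\tau(i)\neq j$ and $i\neq j$ contributes neither to $u^\star_f(x)=\sum_i x_{ii}$ nor, via the assignment $\tau$, to $u^\star(x)$, so redistributing mass among these entries while keeping their total fixed cannot decrease the ratio, and by extremality of $x$ it cannot increase it either. But, as you yourself note, this only reaches the target configuration when the off-mass $M_j$ is a multiple of $\frac1n$, and the entire content of the lemma sits in the step you defer. There your argument has a genuine gap: from the optimality of $x$ you may conclude that shifting the fractional remainder onto the diagonal \emph{cannot strictly increase} the ratio, but you then assert that it ``must be ratio-neutral.'' That does not follow. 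Such a shift changes both the numerator (through the $k$ entries $x_{ij}$ with $\tau(i)=j$, which by the previous lemma move together with $x_{jj}$) and the denominator (through $x_{jj}$ itself), and a priori it could strictly \emph{decrease} the ratio, in which case the modified instance is no longer extremal and the ``w.l.o.g.'' collapses. Optimality bounds every perturbation from above, not from below.

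The missing idea is to make the perturbation one-dimensional and exploit its algebraic form. The paper parametrizes the whole column $j$ by a single scalar $t$: $x'_{ij}(t)=\frac1k-t$ on the $k$ items assigned to $j$ and $x'_{ij}(t)=\frac{kt}{n-k}$ elsewhere, for $0\le t\le\frac1k-\frac1n$. A lower bound for the resulting ratio is the fractional-linear function $f(t)=\frac{a+k\left(\frac1k-t\right)}{b+\frac1k-t}$, whose derivative $(a-bk)/(b+\frac1k-t)^2$ has constant sign; hence $f$ is monotone or constant and attains its maximum at an endpoint of the interval. At $t=0$ every off-entry equals $0$ and at $t=\frac1k-\frac1n$ every off-entry equals $\frac1n$, so one of the two endpoints is at least as good as $x$ and the lemma follows, with no case analysis on the fractional remainder or on the caps. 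Note also that this yields the stronger conclusion that the off-entries are \emph{all} $0$ or \emph{all} $\frac1n$, which is what the subsequent structural claim (columns consisting of zeros and $k_j$ copies of $1/k_j$) actually requires; your mixed configurations would not deliver that. To salvage your two-stage plan you would need to supply precisely this monotonicity argument for the coupled move; invoking optimality alone will not close the gap.
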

\begin{proof}
  Again we set $w=\sum_{i:\tau(i)=j} x_{ij}$ and $k=\left|\left\{i\mid \tau(i)=j\right\}\right|$, where we can assume $k<n$. As an 
  abbreviation we use $a=u^\star(x)-w$ and $b=u^\star_f(x)-\frac{w}{k}$. With this we define $f(t)=\frac{a+k\cdot 
  \left(\frac{1}{k}-t\right)}{b+\frac{1}{k}-t}$ for $0\le t\le \frac{1}{k}-\frac{1}{n}$. At $t=\frac{1-w}{k}$ the function $f$ gives
  the price of fairness for $x$. In general we consider $x'(t)$ arising from $x$ by setting $x'_{ij}(t)=\frac{1}{k}-t$ for all $i$ 
  with $\tau(i)=j$ and $x'_{ij}(t)=\frac{kt}{n-k}$ otherwise. We have $u^\star_f(x'(t))=b+\frac{1}{k}-t$ and $u^\star(x'(t))\ge 
  a+k\cdot\left(\frac{1}{k}-t\right)$ 
  so that the price of fairness for $x'(t)$ is at least $f(t)$. The construction is feasible for $0\le t\le \frac{1}{k}-\frac{1}{n}$ only, 
  since otherwise $x'_{jj}\ge x'_{ij}$ is violated. Thus $f(t)$ is well defined and we have 
  $f'(t)=(-bk+a)/(b+1/k-t)^2$.
  So either $f(t)$ is strictly monotonic or constant and attains its maximum at the boundary.
\end{proof}

Thus we can assume w.l.o.g.\ that $x_{\star j}$ consists of zeros and $k_j$ times the entry $1/k_j$, where $k_j$ is a positive integer. 
If $k_j<n$, then all $k_j$ items with utility $1/k_j$ are assigned to agent $j$ in an optimal solution. If $\tau(i)\neq j$, then 
$x_{ij}\in\{0,1/n\}$. We can further assume that there is at most one big agent $j$ with $k_j=n$.

\subsection{An improved ILP formulation and an almost tight bound for $\mathbf{p_{envy}(n,n)}$}
Given the structural result from the previous subsection we can reformulate the ILP to:
\begin{eqnarray*}
    \!\!\!\!\!\!&&\max \sum_{i=1}^n \frac{r_i}{i}-\alpha\sum_{i=1}^n \frac{s_i}{i}\\
    \!\!\!\!\!\!&&s_{i}\in \mathbb{Z}_{\ge 0}\quad \forall \, 1\le i\le n \quad\quad\quad  \sum_{i=1}^n s_i=n  \\
    \!\!\!\!\!\!&&r_{i}\in \mathbb{Z}_{\ge 0}\quad \forall \, 1\le i\le n \quad\quad\quad \sum_{i=1}^n r_i=n  \quad\quad\quad
    r_i\le i\cdot s_i\quad \forall\,1\le i\le n
\end{eqnarray*}
Here $s_i$ counts how often we have $k_j=\frac{1}{i}$ and $r_j$ counts how often we have $x_{i\tau(i)}=\frac{1}{j}$. Having this ILP 
formulation at hand the exact values of $p_{envy}(n,n)$ can be computed easily for all $n\le 100$. We observe that in each case 
at most three values of the vector $s$ are non-zero -- going in line with our previous empirical findings. 

\begin{lemma}
  If $x_{jj}=\frac{1}{k}$ and $x_{j'j'}=\frac{1}{k+g}$, where $k,g\in\mathbb{N}$ and $k+g<n$, then $g\le 1$.
\end{lemma}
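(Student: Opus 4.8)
The plan is to work entirely inside the reformulated integer program and to argue by contradiction through two local single-agent exchange moves. Set $p:=p_{envy}(n,n)$ and fix an optimal solution $(r,s)$, so that $p=\big(\sum_i r_i/i\big)\big/\big(\sum_i s_i/i\big)$. The hypotheses $x_{jj}=\tfrac1k$ and $x_{j'j'}=\tfrac1{k+g}$ translate into $s_k\ge 1$ and $s_{k+g}\ge 1$, and I assume for contradiction that $g\ge 2$ while $k+g<n$. For a fixed vector $s$ the optimal $r$ is obtained greedily, filling the value slots $1/i$ from the smallest index upward until $n$ items are placed; let $T$ denote the resulting frontier index, so that every class $i<T$ is used to capacity ($r_i=i\,s_i$) and every class $i>T$ is unused ($r_i=0$). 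First I would record a consequence of optimality: an agent sitting in an occupied class $i$ with $T<i<n$ contributes nothing to the numerator but $1/i$ to the denominator, so relocating it to class $n$ would strictly decrease the denominator without touching the numerator. Hence no such class is occupied, and in particular $k,k+g\le T$.

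Next I would analyse two moves and use that neither may increase the objective $\sum_i r_i/i-p\sum_i s_i/i$ (which equals $0$ at the optimum, since $r$ is re-optimised greedily after each move). Shifting one agent from class $k$ to class $k+1$ raises the capacity below the frontier by one unit, so the frontier slot sheds one item; this gives $\Delta(\mathrm{num})=-1/T$ and $\Delta(\mathrm{den})=-1/(k(k+1))$, and non-improvement, i.e.\ $\Delta(\mathrm{num})-p\cdot\Delta(\mathrm{den})\le 0$, rearranges to $p\le k(k+1)/T$. Because $g\ge 2$ forces $k+1\le k+g-1<T$ whenever $k+g\le T$, the target class $k+1$ always lies strictly below the frontier, so this first bound holds unconditionally. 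Symmetrically, shifting one agent from class $k+g$ to class $k+g-1$ lowers the capacity below the frontier by one unit, giving $\Delta(\mathrm{num})=+1/T$ and $\Delta(\mathrm{den})=+1/((k+g)(k+g-1))$, and non-improvement yields $p\ge (k+g)(k+g-1)/T$. This second computation is clean as long as $k+g<T$, i.e.\ when class $k+g$ is already used to capacity.

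Combining the two bounds gives $(k+g)(k+g-1)\le pT\le k(k+1)$, which is impossible for $g\ge 2$ since then $(k+g)(k+g-1)\ge (k+1)(k+2)>k(k+1)$; this is the contradiction that forces $g\le 1$. The step I expect to be the main obstacle is the boundary case $k+g=T$, in which class $k+g$ is only the partially filled frontier class, so the second move must be re-evaluated by recomputing the greedy fill. When the frontier still carries at least $k+g-1$ items the recomputation reproduces exactly the bound $p\ge k+g-1=(k+g)(k+g-1)/T$, so the argument closes as before. The genuinely delicate residue is $s_{k+g}=1$ together with very few items at the frontier; here I would first dispose of $s_{k+g}\ge 2$ and of a lightly loaded frontier by the relocation-to-class-$n$ move used above, and then treat the remaining degenerate configuration (an integral tie $r_{k+g}=p$) by a separate ad hoc comparison. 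Making this last corner fully rigorous, rather than the main exchange inequality, is where the real work lies.
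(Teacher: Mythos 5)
Your core idea --- two local single-agent exchanges in the $(r,s)$-program, one giving $p\le k(k+1)/T$ and one giving $p\ge (k+g)(k+g-1)/T$, which conflict when $g\ge 2$ --- is sound in the generic configuration, but the proof as written has a genuine gap exactly where you flag it, and in one further place you do not flag. First, the case $T=k+g$ with a lightly loaded frontier: when $s_{k+g}=1$ and $r_{k+g}<k+g-1$, your second move only yields $p\ge r_{k+g}$, and the auxiliary relocation-to-class-$n$ move yields $p\le r_{k+g}$, leaving you with the integral tie $p=r_{k+g}$ and no contradiction; you explicitly defer this ``ad hoc comparison,'' so the argument is not closed. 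Second, even in the regime $k+g<T$ your computation of $\Delta(\mathrm{num})=+1/T$ for the second move silently assumes the class $T$ has spare capacity. If the last occupied class is exactly full ($r_T=Ts_T$) and $s_n\ge 1$, the displaced item lands in class $n$, the bound degrades to $p\ge (k+g)(k+g-1)/n$, and the contradiction with $p\le k(k+1)/T$ evaporates. In other words, your upper bound is governed by the \emph{last occupied} slot and your lower bound by the \emph{first free} slot, and these need not sit in the same class; reconciling them is additional unfinished work.

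The missing idea --- and the paper's entire proof --- is to perform both swaps \emph{simultaneously}: replace one agent of class $k$ by an agent of class $k+1$ and one agent of class $k+g$ by an agent of class $k+g-1$. Since $k+(k+g)=(k+1)+(k+g-1)$, the total capacity $\sum_i i\,s_i$ is conserved, so the $2k+g$ items previously assigned to these two agents can be reassigned to them at the same total value $\frac1k\cdot k+\frac1{k+g}\cdot(k+g)=2=\frac1{k+1}\cdot(k+1)+\frac1{k+g-1}\cdot(k+g-1)$; no item ever has to cross the frontier, so $u^\star$ does not decrease regardless of how the frontier class is loaded. Meanwhile $\frac1k+\frac1{k+g}>\frac1{k+1}+\frac1{k+g-1}$ for $g\ge 2$ (strict convexity of $t\mapsto 1/t$), so $u^\star_f$ strictly decreases and optimality is contradicted in one step. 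Decoupling the two moves is what creates all of your boundary cases; coupling them removes the frontier from the argument entirely.
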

\begin{proof}
  First note that $j$ and $j'$ are big agents. To the contrary assume $g\ge 2$ and consider $x'$ arising from $x$ by replacing $x_{ij}$ 
  by elements of the form $\frac{1}{k+1}$, $0$ and
  $x_{ij'}$ by elements of the form $\frac{1}{k+g-1}$, $0$ in a suitable way. Since
  $$
    \frac{1}{k}\cdot k+\frac{1}{k+g}\cdot(k+g)=2=\frac{1}{k+1}\cdot(k+1)+\frac{1}{k+g-1}\cdot(k+g-1)
  $$
  we have $u^\star(x')\ge u^\star(x)$.
  Since
  $$
    \frac{1}{k}+\frac{1}{k+g}=\frac{2k+g}{k^2+kg}>\frac{2k+g}{k^2+kg+g-1}=\frac{1}{k+1}+\frac{1}{k+g-1}
  $$
  we have $u^\star_f(x')< u^\star_f(x)$.
\end{proof}

\begin{theorem}
  $p_{envy}(n,n)\le\frac{1}{2}\sqrt{n}+O(1)$.
\end{theorem}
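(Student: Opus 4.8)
The plan is to work directly with the reduced program of the previous subsection and to prove the stronger, fully explicit inequality that $\sum_{i=1}^n r_i/i \le \rho\sum_{i=1}^n s_i/i$ holds for \emph{every} pair of nonnegative (not necessarily integral) vectors $(r,s)$ with $\sum_i r_i=\sum_i s_i=n$ and $r_i\le i\,s_i$, where $\rho=\tfrac{n}{2\sqrt n-1}$. Since polynomial division gives $\tfrac{n}{2\sqrt n-1}=\tfrac12\sqrt n+\tfrac14+O(1/\sqrt n)$, and since $p_{envy}(n,n)$ is the maximum of $u^\star(x)/u^\star_f(x)=(\sum_i r_i/i)/(\sum_i s_i/i)$ over the feasible region, with denominator at least $\sum_i s_i/n=1>0$, this single inequality yields the theorem at once. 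Passing to the real relaxation costs nothing here and turns the argument into a clean linear-programming duality computation, so that integrality never has to be invoked.

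I would prove the inequality by a weighting (Lagrangian) certificate. First I attach a nonnegative multiplier $\delta_i$ to each capacity constraint $r_i\le i\,s_i$ and keep the two equalities $\sum_i r_i=n$ and $\sum_i s_i=n$ in reserve. Setting $\gamma=\tfrac{1}{2\sqrt n-1}$ and $\rho=\gamma n$, the aim is to pick every $\delta_i\ge 0$ so that simultaneously $\tfrac1i-\delta_i\le\gamma$ and $i\delta_i-\tfrac\rho i\le-\gamma$; equivalently $\delta_i$ must lie in the interval $[\,\max(0,\tfrac1i-\gamma),\ \tfrac{\rho}{i^2}-\tfrac{\gamma}{i}\,]$. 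Granting such a choice, I bound $\sum_i r_i(\tfrac1i-\delta_i)\le\gamma\sum_i r_i=\gamma n$ and $\sum_i s_i(i\delta_i-\tfrac\rho i)\le-\gamma\sum_i s_i=-\gamma n$. Adding the two gives $\sum_i r_i(\tfrac1i-\delta_i)+\sum_i s_i(i\delta_i-\tfrac\rho i)\le 0$; the left-hand side equals $\sum_i r_i/i-\rho\sum_i s_i/i+\sum_i\delta_i(i\,s_i-r_i)$, and since the last sum is nonnegative I conclude $\sum_i r_i/i-\rho\sum_i s_i/i\le 0$, as required.

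The heart of the matter, and the step I expect to be the main obstacle, is verifying that the interval for $\delta_i$ is nonempty for every $i\in\{1,\dots,n\}$, for it is exactly here that the scale $\sqrt n$ and the constant $\tfrac12$ are forced. Nonemptiness splits into two conditions. The upper endpoint must be nonnegative, i.e.\ $\rho\ge\gamma i$, whose worst case $i=n$ reads $\rho\ge\gamma n$ and holds with equality by our choice. Moreover, whenever $\tfrac1i>\gamma$ we need $\tfrac1i-\gamma\le\tfrac{\rho}{i^2}-\tfrac\gamma i$, which after clearing denominators becomes $\rho\ge(1+\gamma)i-\gamma i^2$. The right-hand side is a downward parabola in $i$ maximized at $i=\tfrac{1+\gamma}{2\gamma}$; with $\gamma=\tfrac{1}{2\sqrt n-1}$ this optimizer is precisely $i=\sqrt n$, and its maximal value $\tfrac{(1+\gamma)^2}{4\gamma}$ equals $\gamma n=\rho$ exactly when $\gamma(2\sqrt n-1)=1$. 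Thus both requirements hold, the two binding at $i=n$ and at $i=\sqrt n$ respectively, and the interval is nonempty throughout.

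Finally I would record that this bound is essentially best possible, thereby confirming that the multipliers were chosen optimally rather than wastefully: taking $\lfloor\sqrt n\rfloor$ big agents with $k_j=\lceil\sqrt n\rceil$ and letting all remaining agents be uniform ($k_j=n$) realizes a ratio of $\tfrac12\sqrt n+O(1)$, matching $\rho$. The only care needed in the write-up is to check that replacing the real optimizer $i=\sqrt n$ by the nearest integer, together with the corrections already absorbed into $\rho=\tfrac12\sqrt n+\tfrac14+O(1/\sqrt n)$, affects only the $O(1)$ term and does not accumulate across the $n$ indices.
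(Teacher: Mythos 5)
Your proof is correct, and it takes a genuinely different route from the paper's. Both arguments start from the reduced $(r,s)$ program (so both depend on the structural lemmas showing that a worst-case $x_{\star j}$ consists of $k_j$ copies of $1/k_j$ and zeros, with $u^\star=\sum_i r_i/i$, $u^\star_f=\sum_i s_i/i$, $\sum_i r_i\le n$, $\sum_i s_i=n$, $r_i\le i\,s_i$ --- you should state this dependence explicitly). From there the paper first proves an extra lemma forcing the big agents' values $k_j$ to take at most two consecutive values besides $n$, and then optimizes a single-variable function $g(d)=\tfrac{n(d+1)}{d^2+n}$ of $d=a+b$. You skip that lemma entirely: you relax to the continuous LP and exhibit an explicit Lagrangian certificate, with multipliers $\delta_i\in[\max(0,\tfrac1i-\gamma),\,\tfrac{\rho}{i^2}-\tfrac{\gamma}{i}]$ for the capacity constraints and $\gamma=\tfrac{1}{2\sqrt n-1}$ pricing the cardinality constraints. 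I checked the feasibility of the multipliers: the binding cases are $\rho\ge\gamma i$ at $i=n$ and $\rho\ge(1+\gamma)i-\gamma i^2$, whose maximum over real $i$ is $\tfrac{(1+\gamma)^2}{4\gamma}=\gamma n=\rho$, attained at $i=\sqrt n$; both verify. What your approach buys: it avoids the consecutive-$k$ lemma, it proves the bound for the full LP relaxation (hence for all feasible integral profiles, not only those known to occur at the optimum), and it yields the clean explicit constant $\rho=\tfrac{n}{2\sqrt n-1}=\tfrac12\sqrt n+\tfrac14+O(1/\sqrt n)$, which is if anything slightly sharper than the paper's stated $\tfrac12\sqrt n+\tfrac1n+1$. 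Two cosmetic points: your closing worry about rounding $i=\sqrt n$ to an integer is moot, since the parabola inequality holds for all real $i$ and you only ever evaluate it at integers; and note that after the structural reduction one only gets $\sum_i r_i\le n$ (some items may contribute $0$), but your estimate $\sum_i r_i(\tfrac1i-\delta_i)\le\gamma\sum_i r_i\le\gamma n$ survives this unchanged since $\gamma\ge 0$.
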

\begin{proof}
  Choose $k$ such that $k_j\in\{k-1,k,n\}$ for all $j\in \mathcal{J}$ and set $a=\left|\{j\mid k_j=k\}\right|$, 
  $b=\left|\{j\mid k_j=k-1\}\right|$. With this we have $u^\star(x)=a+b+c/n$, where $c=n-ak-b(k-1)$ and 
  $u^\star_f(x)=a/k+b/(k-1)+(n-a-b)/n$. Next we set $d=a+b$ and $\tilde{k}=(ak+b(k-1))/d$. Since $c/n\le 1$, 
  $d/\tilde{k}\le a/k+b/(k-1)$, and $\tilde{k}\le n/d$ we have
  $$
    \frac{u^\star(x)}{u^\star_f(x)}\le \frac{d+1}{\frac{d}{\tilde{k}}+\frac{n-d}{n}}\le \frac{n(d+1)}{d^2+n-d}\le \frac{n(d+1)}{d^2+n}=:g(d).
  $$
  For $d\in \{0,n\}$ we have $g(d)=1$. The unique local maximum of $g(d)$ in $(0,n)$ is at attained at $d=-1+\sqrt{1+n}$. 
  Thus $p_{envy}(n,n)=\frac{u^\star(x)}{u^\star_f(x)}\le g(d)\le\max\left(1,\frac{1}{2}\sqrt{n}+\frac{1}{n}+1\right)$. 
\end{proof}

\begin{lemma}
  $p_{envy}(n,n)\ge\frac{1}{2}\sqrt{n}-\frac{1}{2}$.
\end{lemma}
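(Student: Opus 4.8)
The plan is to exhibit an explicit worst-case family whose ratio already realizes the order of magnitude forced by the upper bound, where the extremum of $g(d)$ sits near $d\approx\sqrt{n}$. Concretely, I would set $k=\lfloor\sqrt{n}\rfloor$ and split the agents into $k$ \emph{big} and $n-k$ \emph{small} ones. Each big agent receives a private block of $k$ items which it values at $1/k$ each and at $0$ elsewhere; the $k$ blocks are pairwise disjoint and each block contains the item its owner gets in the envy-free allocation. Every small agent values all $n$ items uniformly at $1/n$. The remaining $n-k^2$ items (at least $0$ of them, since $k=\lfloor\sqrt n\rfloor$ forces $k^2\le n$) are worth $0$ to every big agent and $1/n$ to every small agent.

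Before computing anything I would check feasibility and envy-freeness. The profile is valid because a big agent's utilities sum to $k\cdot\tfrac1k=1$ and a small agent's to $n\cdot\tfrac1n=1$, and the inequality $k^2\le n$ guarantees that the $k$ disjoint blocks fit among the $n$ items. The allocation $A_j=\{j\}$ is envy-free: in a big agent's column the maximal entry is $1/k$, attained at its own item, while a small agent is indifferent, valuing every item at $1/n$. Hence the instance is admissible, and as noted earlier all of its envy-free allocations share the welfare $u^\star_f(x)=\sum_j x_{jj}$.

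Next I would evaluate the two welfares directly. Using $u^\star(x)=\sum_i\max_j x_{ij}$, each of the $k^2$ block items contributes $1/k$ and each of the $n-k^2$ leftover items contributes $1/n$, so $u^\star(x)=k+(n-k^2)/n$; the envy-free welfare is $u^\star_f(x)=k\cdot\tfrac1k+(n-k)\cdot\tfrac1n=2-k/n$. The final estimate is then elementary: since $k^2\le n$ the numerator satisfies $k+(n-k^2)/n\ge k$, while the denominator satisfies $2-k/n\le 2$, so $u^\star(x)/u^\star_f(x)\ge k/2$. Finally $k=\lfloor\sqrt n\rfloor\ge\sqrt n-1$ yields $p_{envy}(n,n)\ge k/2\ge\tfrac12\sqrt n-\tfrac12$.

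The arithmetic is immediate once the welfares are in hand, so I do not expect the estimate to be the obstacle; the only place demanding care is the discrete bookkeeping of the construction, namely fitting $k$ disjoint blocks of size $k$ inside $n$ items for every $n$ and confirming the envy-free column maxima, which is exactly where the floor and the constraint $k^2\le n$ must be handled cleanly. The slack between the crude $k/2$ and the sharper $\tfrac{k}{2-k/n}$ absorbs the rounding loss, so no further optimization over $k$ is required for the stated bound.
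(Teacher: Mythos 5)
Your construction is exactly the one the paper uses: $\lfloor\sqrt n\rfloor$ big agents each valuing a private disjoint block of $\lfloor\sqrt n\rfloor$ items at $1/\lfloor\sqrt n\rfloor$, the rest valuing everything uniformly at $1/n$, followed by the same bounds $u^\star\ge k$ and $u^\star_f\le 2$. Your write-up is in fact more careful than the paper's (explicit disjointness of the blocks and an explicit envy-freeness check), but the argument is the same.
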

\begin{proof}
  Set $a=k=\left\lfloor\sqrt{n}\right\rfloor$ and $x'$ with $a$ rows of the form $(\frac{1}{k},\dots,\frac{1}{k},0,\dots,0)$ and 
  $n-a$ rows of the form $(\frac{1}{n},\dots,\frac{1}{n})$. With this we have
  $$
    p_{envy}(n,n)\ge \frac{u^\star(x')}{u^\star_f(x')}\ge \frac{a+\frac{n-ak}{n}}{\frac{a}{k}+\frac{n-a}{n}}
  \ge\frac{a}{2}\ge\frac{\sqrt{n}}{2}-\frac{1}{2}.
  $$ 
\end{proof}

Thus we can state $p_{envy}(n,n)=\frac{1}{2}\sqrt{n}+\Theta(1)$.

\section{Bounds for $\mathbf{p_{envy}(n,m)}$ for a small number of items}
If the number of items is not too large, i.e., $m\le n+c\sqrt{n}$ for a constant $c$, then we can 
utilize our results for the case $m=n$ in order to deduce an $\Theta(\sqrt{n})$-bound for the price
of fairness.

\begin{theorem}
  If $m\in n+\Theta(\sqrt{n})$ with $m\ge n$ then $p_{envy}(n,m)\in \Theta(\sqrt{n})$.
\end{theorem}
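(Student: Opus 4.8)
**The plan is to establish matching upper and lower bounds, leveraging the monotonicity relation and the already-proven result for $m=n$.**

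The key structural inequality from the excerpt is $p_{envy}(n,m)\le p_{envy}(n,m+1)\le p_{envy}(n)$, together with the established facts $p_{envy}(n,n)=\frac{1}{2}\sqrt{n}+\Theta(1)$ and $p_{envy}(n)=\Theta(n)$. Writing $m=n+\Theta(\sqrt{n})$, the lower bound is immediate: since $m\ge n$, monotonicity gives $p_{envy}(n,m)\ge p_{envy}(n,n)\ge \frac{1}{2}\sqrt{n}-\frac{1}{2}$, so $p_{envy}(n,m)\in\Omega(\sqrt{n})$. The entire difficulty therefore lies in the upper bound $p_{envy}(n,m)\in O(\sqrt{n})$, which cannot follow merely from $p_{envy}(n,m)\le p_{envy}(n)=\Theta(n)$.

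For the upper bound I would generalize the proof of the theorem bounding $p_{envy}(n,n)$. That argument reduced to the structural observation that utilities take the form $1/k_j$ and then optimized a single-variable function $g(d)$. The plan is to redo the structural lemmas (Lemmas~2.6--2.8 and the $g\le 1$ spacing lemma) for general $m$. In any envy-free allocation each agent still needs $u_j(A_j)\ge \frac{1}{n}$, but now an agent may receive several items, so the ``one item per agent'' simplification fails. Still, the same local perturbation arguments should force each agent's positive utilities to be equal, say to $1/k_j$ with $k_j\le n$, and force at most two distinct finite values of $k_j$ plus the value $n$. The welfare $u^\star(x)$ is bounded by summing the $m$ largest column entries, and $u^\star_f(x)=\sum_j 1/k_j\cdot(\text{\# items agent }j\text{ keeps})$. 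Because there are $m=n+\Theta(\sqrt{n})$ items rather than $n$, the counting that produced $g(d)=\frac{n(d+1)}{d^2+n}$ will acquire an additive $O(\sqrt{n})$ correction in the numerator (reflecting the extra items), but the denominator's quadratic term $d^2$ is unchanged, so the maximizing $d$ is still $\Theta(\sqrt{n})$ and the maximum of $g$ stays $\Theta(\sqrt{n})$.

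The main obstacle will be extending the structural characterization to the multi-item regime: when $m>n$ the optimal envy-free allocation need not assign exactly one item per agent, so the clean reduction used for $m=n$ must be replaced by a more careful perturbation argument controlling how the $\Theta(\sqrt{n})$ surplus items are distributed. The worst case intuitively concentrates surplus items on the ``big'' agents with small $k_j$, and I would verify that even this concentration contributes only an $O(\sqrt{n})$ additive term to $u^\star(x)$ relative to $u^\star_f(x)$, so that the ratio remains $O(\sqrt{n})$. Once the structural reduction is in place, the final optimization is a routine single-variable calculus exercise mirroring the $g(d)$ computation, yielding $p_{envy}(n,m)\le c'\sqrt{n}$ for a constant $c'$ depending on the constant hidden in $m=n+\Theta(\sqrt{n})$. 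Combining the two bounds gives $p_{envy}(n,m)\in\Theta(\sqrt{n})$.
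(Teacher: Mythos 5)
Your lower bound is exactly the paper's: monotonicity of $p_{envy}(n,m)$ in $m$ plus the constructive bound for $p_{envy}(n,n)$, so that part is fine. The problem is the upper bound. You propose to redo the structural lemmas (all positive utilities of an agent equal to some $1/k_j$, at most two finite values of $k_j$ plus $n$, then the single-variable optimization of $g(d)$) for general $m$, but you only assert that ``the same local perturbation arguments should force'' this structure; you do not carry it out, and it is not clear that it can be. Those lemmas lean heavily on the fact that for $m=n$ every envy-free allocation gives each agent exactly one item, so envy-freeness reduces to the entrywise comparisons $x_{jj}\ge x_{ij}$ and $u^\star_f(x)=\sum_i x_{ii}$. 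Once $m>n$ some agents hold multi-item bundles, envy-freeness becomes a comparison of bundle values $u_j(A_j)\ge u_j(A_{j'})$ over all pairs of agents, and the clean normalization collapses. The paper itself lists extending the ILP/structural approach to $m>n$ as an open question in its conclusion, so the step you defer to ``a more careful perturbation argument'' is essentially that open problem, not a routine verification.

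The paper sidesteps this entirely with a reduction to the square case, and this is the idea your outline is missing. Take a near-worst-case matrix $x$ for $(n,m)$ and let $S$ be the set of agents receiving a single item in the optimal envy-free allocation; since every other agent receives at least two of the $m=n+O(\sqrt{n})$ items, a counting argument gives $n-|S|\in O(\sqrt{n})$. Keep the utility rows of the agents in $S$ and replace the remaining agents by $m-|S|$ dummy agents with uniform utility $1/m$ for every item. The resulting instance $x'$ has $m$ agents and $m$ items, so the already-proved bound $p_{envy}(m,m)\in O(\sqrt{m})=O(\sqrt{n})$ applies to it, and the modification changes $u^\star$ by at most an additive $O(\sqrt{n})$ and $u^\star_f$ by at most $O(1/\sqrt{n})$, which preserves an $O(\sqrt{n})$ ratio. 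Without this reduction, or an actual proof of the structural lemmas in the multi-item regime, your upper bound does not go through.
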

\begin{proof}
  Consider a utility matrix $x$ with $p_{envy}(n,m)\le u^\star(x)/u^\star_f(x)+\epsilon$ for a small constant $\epsilon\ge 0$. Choose 
  a constant $c\in\mathbb{R}_{\ge 0}$ with $m=c+\sqrt{n}$. By $S\subseteq\mathcal{J}$ we denote the set of agents to which a single 
  item is assigned in the optimal envy-free allocation and set $s=|S|$. All other agents get at least two items so that $n-s\le 2c\sqrt{n}$ 
  and $s\ge n-c\sqrt{n}$. Now consider another utility matrix $x'$ arising from $x$ as follows. For each agent in $S$ copy the utility row 
  from $x$. Replace the remaining agents from $\mathcal{J}\backslash S$ by $m-s\ge n-s$ new agents having utility $1/m$ for each item. 
  With this we have $u^\star_f(x)\ge u^\star_f(x')-(m-s)\cdot\frac{1}{m}\ge u^\star_f(x')-\frac{3c}{\sqrt{n}}$ and 
  $u^\star(x)\le u^\star(x')+2c\sqrt{n}$ since each agent $j\notin S$ could contribute at most $1$ to $u^\star(x)$. Thus we have
  $$
    \frac{u^\star(x)}{u^\star_f(x)}\le\frac{u^\star(x')+2c\sqrt{n}}{u^\star_f(x')-\frac{3c}{\sqrt{n}}}
    \le \frac{1}{1-\frac{3c}{\sqrt{n}}} \cdot\left(\frac{u^\star(x')}{u^\star_f(x')}+2c\sqrt{n}\right)
  $$
  due to $u^\star_f(x')\ge 1$. Since the number of agents coincides with the number of items in $x'$, the right hand side
  of the last inequality is in $O(\sqrt{n})$. The lower bound follows from the case $m=n$.
\end{proof}

\section{Conclusion}
We have introduced the price of fairness in terms of the number of agents and the number of items. As a special case we have considered 
the allocation of indivisible goods with respect to envy-freeness as a fairness criterion and \textit{normalized} additive 
utility functions. It turned out that the price of fairness is significantly lower if only a small number of items has to be allocated
compared to the case of a large number of items. Up to a constant we have determined the exact value of the price of fairness for the special 
case when the number of items coincides with the number of agents. In order to determine the exact value we have given an efficient ILP 
formulation.

We close with some open questions: Can further values of $p_{envy}(n,m)$, where $m>n$, be computed exactly? Can the ILP approach be extended to $m>n$? 
What is the price of fairness in our setting for $m\in\Theta(n)$ (or more generally, for $m\in \Theta(n^\alpha)$ with $\alpha<2$)? What happens 
for other fairness criteria? 


\end{document}